\theoremstyle{plain}
\theoremstyle{plain}
\newtheorem{theorem}{Theorem}[section]
\newtheorem{lemma}{Lemma}[section]
\newtheorem{proposition}{Proposition}[section]
\theoremstyle{remark}
\newtheorem{remark}{Remark}[section]
\numberwithin{equation}{section}
\newcommand{\bC}{{\mathbb C}}
\newcommand{\bZ}{{\mathbb Z}}
\newcommand{\bN}{{\mathbb N}}
\newcommand{\End}{\operatorname{End}}
\newcommand{\B}{\mathcal{B} }
\newcommand{\fgl}{\mathfrak {gl} }
\renewcommand{\>}{\rangle}
\begin{document}

\title{Action of Clifford algebra on  the space of  sequences of  transfer operators}
%\thanks{Grants or other notes
%about the article that should go on the front page should be
%placed here. General acknowledgments should be placed at the end of the article.}}
%\subtitle{Do you have a subtitle?\\ If so, write it here}

%\titlerunning{Short form of title}        % if too long for running head

%\authorrunning{Short form of author list} % if too long for running head

\author{Natasha Rozhkovskaya}
\address{Department of Mathematics, Kansas State University, Manhattan, KS 66506, USA}
\email{rozhkovs@math.ksu.edu}
\keywords{Transfer matrices \and   Schur symmetric functions \and generalized quantum integrable spin chain  \and  tau-function}
\subjclass[2010]{ 81R12\and 05E05 \and17B65\and 17B69}

\maketitle

\begin{abstract}
We deduce from a  determinant identity 
on quantum  transfer matrices  of  generalized  
quantum  integrable  spin  chain model  their 
  generating functions. We construct the isomorphism of Clifford algebra modules of sequences  of transfer matrices and
the boson space of symmetric functions. As an application, tau-functions  of transfer matrices immediately arise from classical tau-functions 
of symmetric functions.  
\end{abstract}

\section{Introduction}
Connections between transfer matrices  of quantum integrable models  and solutions of classical  integrable  hierarchies of non-linear  partial differential  equations, observed in \cite{K1},   were developed later in  a series of papers  \cite{Z1}, \cite{Z3}, \cite{Z2}, \cite{Z4}, etc.
The authors of \cite {Z1} defined   a generating function of commuting  quantum  transfer matrices  of a generalized  
quantum  integrable  spin  chain. They  proved  that this master $T$-operator  obeys   Hirota  bilinear  equations  (see e.g. \cite{D1}, \cite{Hirota1}, \cite{Miwa1}), identifying it  with a  tau-function of the KP  hierarchy.  In \cite{Z3}, \cite{Z2}, \cite{Z4} similar identification is obtained for 
the quantum inhomogeneous XXX spin chain of $GL(N)$  type with twisted boundary conditions, quantum spin chain with trigonometric $R$-matrix, and the quantum Gaudin model with twisted boundary conditions. 

   In this  note  we consider another   generating  function  for   quantum transfer matrices of a generalized  
quantum  integrable  spin  chain.  We describe  combinatorial properties of  this generating function   and define the  action  of  Clifford algebra on a space spanned  by  sequences of  quantum transfer matrices.  This approach suggests another   interpretation of  the  connections of transfer matrices to $\tau$-function formalism, the one  that refers to the  construction of the classical  Hirota bilinear  equations  from the  vertex operator  action of Clifford algebra on the boson space of symmetric functions.  One can mention that  the master $T$-operator  of  \cite{Z1}   generalizes  the right-hand side of   the  Cauchy-Littlewood identity 
\begin{align*}
\prod_{ij}(1-a_ix_j)^{-1}=\sum_{\lambda} s_{\lambda} ({ a})s_{\lambda} ({ x}),
\end{align*}
where $s_\lambda({a})$, $s_\lambda({ x})$ are  symmetric   Schur  functions in two independent sets  of variables $a=(a_1, a_2,\dots )$
and $x=(x_1, x_2,\dots )$ (see e.g.  \cite{Md},  I.4 (4.3)),
while  the generating  function in this note  is  the   analogue of  the   generating  function for Schur functions of the form
  \begin{align}
\prod_{i<j}  \left(1-\frac{x_j}{x_i}\right) \prod_{i=1}^{l} H(x_i)=\sum_{l(\lambda)\le l} s_\lambda ({ a})  x_1^{\lambda_1}\cdots x_l^{\lambda_l},\label{fsym}
  \end{align}
where  $H(x)= \sum_k h_k({ a}) x^k$ is the generating function for  complete  symmetric  functions  $h_k({ a})$ (see e.g.  \cite{Md},  I.5, Example 29 (7)).

In Section 2 we introduce the necessary notations and definitions,  and describe  the properties of generating functions of  transfer matrix. In Section 3 we construct the action of Clifford  algebra on the space of sequences of $T$-operators. In  Section 4  we make some remarks on bosonisation of the action of Clifford algebra.

\section{ Properties of Transfer operators}
\subsection{Notations and definitions}
Let $\{e_{ij}\}_{i,j=1,\dots, N}$  be the set of  standard generators of the universal  enveloping algebra $U(\fgl_N(\bC))$.
The action of these generators   on  $\bC^N$  is given by the elementary matrices  $\{ E_{ij}\}_{i,j=1,\dots, N}$.
Let  partition $\lambda=(\lambda_1\ge \dots\ge  \lambda_l>0)$ with  $\lambda_i\in\bZ_{\ge 0}$  and the length $l\le N$  be the  highest  weight of an irreducible  finite-dimensional  $U(\fgl_N(\bC))$-representation $\pi_\lambda$  acting on the space $ V_\lambda$.  We will  use the same  notation for the corresponding  representation of  $GL_N(\bC)$.
Consider    $R$-matrix given by   a  linear  function in variable $u$   with coefficients in  $\End\,V_\lambda \otimes\End \,\bC^N  $:
\begin{align}\label{Rm}
R(u)= 1+\frac{1}{u}\sum_{ij} \pi_\lambda(e_{ji})\otimes E_{ij} . 
\end{align}
 
Let $R_{0i} (u)$  be the  operator that  acts as   $R$-matrix (\ref{Rm}) on the  $0$-th component  $V_\lambda$  and  on the $i$-th component $\bC^N$
of the  tensor  product $V_\lambda \otimes (\bC^N)^{\otimes n}$. We fix a  collection of complex parameters $ a= (a_i )_{i=1,2,\dots}$.
Let  $g$ be an invertible  $N$ by $N$ matrix,  called the  twist matrix, with eigenvalues $(g_1,\dots, g_N)$.  Consider a family of   quantum  transfer matrices ($T$-operators) 
$T_\lambda(u)=T_{\lambda, n}^N (u, a)(g)$  defined  by
 \begin{align}\label{Tm}
	T_{\lambda, n}^N (u, a)(g)=\begin{cases}
	  tr_{V_\lambda}\left(R_{01} (u-a_1)\dots  R_{0n}(u-a_n) (\pi_\lambda(g)\otimes Id^{\otimes n})\right),& \text{if} \quad l\le N,\\
	0,&\text{if} \quad l> N,
	\end{cases}
	\end{align}
 where the  trace is  taken over the  $0$-th component $V_\lambda$.  We think of these $T$-operators as functions from $GL_N(\bC)$  to the space of $\text{Hom}((\bC^N)^{\otimes n} )[1/u]$. We will omit $n, g, a$  and $N$ labels  and  write $T_\lambda(u)$  or $T^N_\lambda(u)$  when no confusion arrises.

 The  R-matirx  (\ref{Rm}) is an image of  the universal $R$-matrix of  the Yangian $Y(\fgl_N(\bC))$, and the  quantum Yang-Baxter equation on the universal  $R$-matrix  implies that    $T$-operators  (\ref{Tm}) commute for a fixed twist  matrix $g$,  a fixed collection  of parameters $a=(a_i)_{i=1,2,\dots}$ and  for all $u$
 and $\lambda$.
 
\begin{remark}\label{rem1}
 When  $n=0$, the element $T_\lambda(u)=tr\,{\pi_\lambda (g)}$  coincides with  the value of the character of  $\pi_\lambda$ on $g$ given by  Schur polynomial  $s_\lambda(g_1,\dots, g_N)$  of the eigenvalues of the matrix $g$. Also    $T_\lambda(u) \to  s_\lambda(g_1,\dots, g_N)\,Id^{\otimes n}$ when $u\to\infty$.
\end{remark}

\subsection {CBR determinant} Introduce the notation 
$ h_k(u)= T_{(k)}(u)$ for $k=1,2,\dots$.
It is  also    convenient   to set  $h_k(u)=0$ for $k=-1, -2,\dots$, and $h_0(u)=1$. 
 Then the  the following remarkable relation  between $T$-matrices was observed in  \cite{Baz},  \cite{Cher} (see also \cite{KNS}).
 \begin{theorem} [Cherednik-Bazhanov-Reshetikhin (CBR) determinant]
	\begin{align}\label{Tdet1}
	T_{\lambda}(u)=\det_{i,j=1,\dots,l} h_{\lambda_i-i+j}(u-j+1).
	\end{align}
\end{theorem}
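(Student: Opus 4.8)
The plan is to prove (\ref{Tdet1}) by the quantum fusion procedure, deducing it from the (dual) Jacobi--Trudi identity for Schur functions lifted to the level of Yangian modules. Throughout, recall that $V_{(k)}=S^k(\bC^N)$, so that $h_k(u)=T_{(k)}(u)$ is the transfer operator (\ref{Tm}) whose auxiliary space is the evaluation module $S^k(\bC^N)$ of $Y(\fgl_N(\bC))$, and that the case $l=1$ of (\ref{Tdet1}) is just this definition.

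The first ingredient is \emph{multiplicativity of transfer operators under fusion}. Since $R(u)$ in (\ref{Rm}) is the image of the universal $R$-matrix, it is compatible with the coproduct of $Y(\fgl_N(\bC))$: for two evaluation modules $W(z)$, $W'(z')$ the $R$-matrix acting on $\big(W(z)\otimes W'(z')\big)\otimes\bC^N$ factors as the product of the two $R$-matrices in disjoint auxiliary factors, and because $\tr_{W\otimes W'}=\tr_W\otimes\tr_{W'}$ and $\pi_\bullet(g)$ is group-like, the assignment (auxiliary module) $\mapsto$ (transfer operator) is multiplicative:
\begin{align*}
T_{W(z)\otimes W'(z')}(u)=T_{W(z)}(u)\,T_{W'(z')}(u).
\end{align*}
Iterating, $\prod_{j=1}^{l} h_{\mu_j}(u-c_j)=T_{\,\bigotimes_j S^{\mu_j}(\bC^N)(u-c_j)\,}(u)$ for any composition $\mu$ and any shifts $c_j$. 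The second ingredient is that the transfer operator is \emph{additive on exact sequences} of auxiliary $Y(\fgl_N(\bC))$-modules: the $R_{0i}$ and $\pi_\lambda(g)$ commute with the Yangian action, so they preserve submodules, and the trace is additive; hence $W\mapsto T_W(u)$ descends to the Grothendieck ring of finite-dimensional evaluation modules.

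The third ingredient is the quantum dual Jacobi--Trudi resolution: with the spectral parameters specialized to the arithmetic progression $u-j+1$, the evaluation module $V_\lambda$ (suitably shifted) admits a finite BGG-type resolution with terms $\bigotimes_{j} S^{\lambda_j-j+\sigma(j)}(\bC^N)(u-\sigma(j)+1)$ for $\sigma\in S_l$, of sign $\mathrm{sgn}(\sigma)$; equivalently $[V_\lambda]=\sum_{\sigma\in S_l}\mathrm{sgn}(\sigma)\big[\bigotimes_j S^{\lambda_j-j+\sigma(j)}(\bC^N)(u-\sigma(j)+1)\big]$ in the Grothendieck ring. Applying $T_\bullet(u)$ and using the first two ingredients gives
\begin{align*}
T_\lambda(u)=\sum_{\sigma\in S_l}\mathrm{sgn}(\sigma)\prod_{j=1}^{l} h_{\lambda_j-j+\sigma(j)}(u-\sigma(j)+1),
\end{align*}
which is the permutation expansion of the determinant in (\ref{Tdet1}), the entry in row $i$, column $j$ being $h_{\lambda_i-i+j}(u-j+1)$.

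I expect the main obstacle to be pinning down the spectral shifts and keeping the argument non-circular. One must verify that the shift attached to the $j$-th symmetric factor is exactly $u-j+1$ — this is dictated by the special ``fusion values'' of $R(u)$ in (\ref{Rm}), at which $R_{0i}$ degenerates onto a symmetric or skew-symmetric component — and one must justify the existence of the above resolution of $V_\lambda$ with precisely these shifts rather than quoting a statement equivalent to (\ref{Tdet1}) itself. A concrete, self-contained substitute for the third ingredient is the original fusion computation: realize $V_\lambda$ inside $\bigotimes_j S^{\lambda_j}(\bC^N)$ via a Young symmetrizer, insert the corresponding projector (a product of $R$-matrices at fusion points) into the trace, and expand; alternatively, derive the Desnanot--Jacobi ($T$-system) relations among the $h_k(u)$ from fusion and then prove (\ref{Tdet1}) by induction on $l$, expanding the determinant along its last row. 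All routes ultimately rest on the same fusion input, so I would present the Grothendieck-ring version as the conceptually cleanest.
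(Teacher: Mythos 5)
First, note that the paper does not prove this theorem at all: it is quoted as a known result of Cherednik and Bazhanov--Reshetikhin (\cite{Cher}, \cite{Baz}, see also \cite{KNS}), so there is no proof in the text to compare yours against; your outline is, in spirit, the fusion-type argument used in that literature. Within your outline, the first two ingredients are fine and essentially routine: multiplicativity $T_{W\otimes W'}(u)=T_W(u)T_{W'}(u)$ follows by reordering commuting $R$-factors and factorizing the trace (with $\pi(g)$ group-like), and additivity on exact sequences holds because each site factor of the monodromy is the image of a fixed element of $Y(\fgl_N(\bC))\otimes\End(\bC^N)$, so Yangian submodules of the auxiliary space (which are automatically $\fgl_N$- and hence $g$-stable) are preserved, making $W\mapsto T_W(u)$ a map on the Grothendieck group.

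The genuine gap is your third ingredient. Once multiplicativity and additivity are granted, the identity $[V_\lambda(u)]=\sum_{\sigma\in S_l}\mathrm{sgn}(\sigma)\bigl[\bigotimes_j S^{\lambda_j-j+\sigma(j)}(\bC^N)(u-\sigma(j)+1)\bigr]$ \emph{is} the theorem; it carries all the difficulty (the exact shift pattern $u-j+1$, the vanishing conventions for negative indices, and the fact that (\ref{Rm}) coincides with the universal $R$-matrix only up to a scalar, so one must check that no scalar prefactors survive in the determinant with the normalization of (\ref{Tm})). You explicitly flag this and the circularity danger, but you do not establish the resolution, and the substitutes you name --- inserting a Young-symmetrizer fusion projector into the trace, or deriving the $T$-system and inducting via Desnanot--Jacobi --- are only named, not carried out; the second in particular also needs boundary data and is not obviously easier than (\ref{Tdet1}) itself. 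As written, the proposal is therefore a reduction of the CBR determinant to an unproven representation-theoretic statement of comparable depth, i.e.\ a plan rather than a proof. To complete it you would need to prove the Grothendieck-ring identity (e.g.\ via $q$-character/quantum Jacobi--Trudi arguments or the co-derivative proof) or cite it --- which is effectively what the paper does by attributing the result to \cite{Cher}, \cite{Baz}.
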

This  formula is an  analogue of Jacobi~--~Trudi identity for Schur symmetric functions. 
It not only implies several  important  properties of  transfer matrices $T_\lambda(u)$,
but  leads to  the action of Clifford algebra of fermions on the linear space spanned by sequences of  $T$-matrices. 
For the later goal  we  follow the approach  developed in \cite{JR1}, \cite{JR2}.

\subsection{Newton's identity}
For any  $u$ set  $e_k(u)=T_{(1^k)}(u)$ for $k=1,2\dots$, and  $e_k(u)=0$ for $k=-1,-2$, as well as $e_0(u)=1$. 
Then from (\ref{Tdet1}) for $k=1,2,\dots$, 
\begin{align} \label{edet1}
e_k(u)= T_{(1)^k} (u)=\det_{i,j=1,\dots, k}[h_{1-i+j} (u-j+1)].
\end{align}

\begin{proposition}
a) The following analogue  of  Newton's formula relation  holds:
\begin{align}\label{Newt}
\sum_{p=-\infty} ^{+\infty}(-1)^{a-p} h_{b+p}(u-p) e_{-p-a} (u-p-1)=\delta_{a,b} \quad \text{for any  $a,b\in \bZ$.}
\end{align}
 b)  Let $\lambda ^\prime=(\lambda^\prime_1,\dots,\lambda^\prime_k)$  be the conjugate partition of the partition $\lambda$.
Then the dual (equivalent)  form of  (\ref{Tdet1}) holds:
\begin{align}\label{Tdet2}
T_\lambda(u)= \det_{i,j=1,\dots,k}[ e_{\lambda^\prime _i-i+j}(u+j-1)].
\end{align}
\end{proposition}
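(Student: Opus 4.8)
The plan is to follow the ``fermionic'' strategy of \cite{JR1}, \cite{JR2}: package the functions $h_k(u)$ into a bi-infinite unitriangular matrix, identify its inverse using the determinant (\ref{edet1}), read off (\ref{Newt}) from the identity $\mathcal H\mathcal H^{-1}=I$, and then deduce (\ref{Tdet2}) from Jacobi's theorem on complementary minors of a matrix and its inverse. As a preliminary reduction, observe that after the substitution $v=u+b$, $c=b-a$ the identity (\ref{Newt}) becomes equivalent to the single family of relations
\begin{align}
\sum_{m=0}^{c}(-1)^{m}\,h_{m}(v-m)\,e_{c-m}(v-m-1)=\delta_{c,0},\qquad c\in\bZ_{\ge0},\ v\ \text{arbitrary},\label{newtred}
\end{align}
the case $c=0$ being trivial and the content lying in the vanishing for $c\ge 1$; so it suffices to establish (\ref{newtred}).

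For part a), introduce the bi-infinite matrix $\mathcal H=\mathcal H(v)$ with entries $\mathcal H_{ij}=h_{j-i}(v-j)$, $i,j\in\bZ$, so that the spectral argument is carried by the column index. Since $h_k(\cdot)=0$ for $k<0$ and $h_0(\cdot)=1$, the matrix $\mathcal H$ is upper unitriangular; because every entry of a product of such matrices is a \emph{finite} sum (over $i\le j\le k$), $\mathcal H$ has a unique two-sided inverse $\mathcal H^{-1}$, again upper unitriangular, whose entries are given by the standard triangular-inverse formula
\begin{align}
(\mathcal H^{-1})_{ij}=(-1)^{j-i}\det\big(\mathcal H_{pq}\big)_{p\in\{i,\dots,j-1\},\,q\in\{i+1,\dots,j\}}\qquad(j>i).\label{triinv}
\end{align}
Relabelling $p=i+r-1$, $q=i+s$ turns the minor in (\ref{triinv}) into $\det_{r,s=1}^{\,j-i}h_{1-r+s}\big((v-i-1)-s+1\big)$, which is precisely the right-hand side of (\ref{edet1}) computing $e_{j-i}(v-i-1)$; hence $(\mathcal H^{-1})_{ij}=(-1)^{j-i}e_{j-i}(v-i-1)$. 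Writing out the $(i,i+c)$ entry of $\mathcal H\mathcal H^{-1}=I$ and cancelling the overall sign $(-1)^{c}$ (harmless, the right-hand side being $\delta_{c,0}$) yields exactly (\ref{newtred}) with spectral parameter $v-i$, and tracing the substitution $v=u+b$, $c=b-a$ backwards recovers (\ref{Newt}).

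For part b), the dual determinant formula is now the shifted analogue of the classical equivalence between the Jacobi--Trudi and dual Jacobi--Trudi expansions. Fixing $l=l(\la)$ and $v=u+1$, one checks that the CBR determinant (\ref{Tdet1}) equals the minor of $\mathcal H(u+1)$ on the (strictly increasing) row set $\{\,i-\la_i:1\le i\le l\,\}$ and the consecutive column set $\{1,\dots,l\}$. Truncating $\mathcal H$ to a finite unitriangular block (of determinant $1$) large enough to contain the relevant band and applying Jacobi's complementary-minor theorem, this minor equals, up to a sign, the complementary minor of $\mathcal H^{-1}$; the Maya-diagram combinatorics of the bijection $\la\leftrightarrow\la'$ identifies the complementary row and column sets with the index data of $\det_{i,j}e_{\la'_i-i+j}(\cdot)$, and substituting the description of $\mathcal H^{-1}$ from part a) — together with a transpose of the resulting determinant, which does not change its value — produces $\det_{i,j=1}^{l(\la')}e_{\la'_i-i+j}(u+j-1)$, the overall sign being $+1$ by the same computation as in the classical case.

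The conceptual skeleton — unitriangular Toeplitz-type matrix, its minor-inverse, Jacobi's theorem — is standard, so the real work, and the step I expect to be the main obstacle, is the bookkeeping of the spectral parameter $u$. In $\mathcal H$ the shift sits on the column index, whereas in $\mathcal H^{-1}$ it effectively sits on the row index; consequently part b) forces a transpose of the final determinant, and one must verify that the arguments transform precisely as $u-j+1\rightsquigarrow u+j-1$ in (\ref{Tdet2}), as well as that all the $\pm1$ shifts in (\ref{newtred}) and (\ref{triinv}) are mutually consistent. Making these shifts line up throughout is exactly the point at which following the template of \cite{JR1}, \cite{JR2} is indispensable.
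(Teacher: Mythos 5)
Your argument is correct and takes essentially the same route as the paper: both rest on recognizing the bi-infinite unitriangular matrix of $h$'s and the matrix of signed $e$'s as mutual inverses — the paper obtains (\ref{Newt}) by first-row expansion of the determinant (\ref{edet1}) and then packages it as $\mathcal{H}\mathcal{E}=Id$, while you equivalently compute $\mathcal{H}^{-1}$ from the unitriangular minor formula and read (\ref{Newt}) off the product. Part b) is in both cases the standard complementary-minor (Jacobi) argument for a pair of inverse matrices, which the paper simply cites (Lemma A.42 of Fulton--Harris) and you carry out in slightly more detail, with the spectral-parameter bookkeeping as you describe.
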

\begin{proof}
 Relation (\ref{Newt}) follows from the  recursive  expansion of the  determinant (\ref{edet1})  for  $e_{b-a} (u+b-1)$ by the  first  row.
 Then (\ref{Newt}) implies  that   the   upper  triangular infinite  matrices
 \begin{align*}
 \mathcal{H} = (\,h_{p-b} (u-p)\,)_{b,p\,\in \bZ},\quad 
 \mathcal{E} =(\, (-1)^{a-p} e_{a-p} (u-p-1)\,)_{p,a\,\in \bZ}
 \end{align*}
 satisfy   the identity $\mathcal{H}\mathcal{E}=Id$, and  
   (\ref{Tdet2})  follows from the  standard argument on  the  relations on  minors of matrices that are inverse of each other 
 (see e.g. Lemma A.42 of \cite{Fult}).
\end{proof}

\begin{remark}
For some applications it  is convenient to write  Newton's  formula  in  the form of an equation on generating functions:
$$
YH(u|t) YE(u-1|t)=1,
$$
where
$$
YH(u|t)= \sum_{s=0}^{\infty} (te^{-\partial_u})^s h_s(u),
\quad 
YE(u|t)= \sum_{s=0}^{\infty} (-1)^{s} e_s(u)(te^{-\partial_u})^s
$$
are  generating functions for  {\it operators} acting on the space of $\text{Hom}\,(\bC^N)^{\otimes n}$-valued  functions in variable $u$, and $e^{\partial_u} (f(u))= f(u+1)$ is the   a shift  operator of the variable $u$.  
Indeed,
\begin{align*}
YH(u|t) YE(u-1|t)=
\sum_{s,p\, =0}^{\infty} h_s(u-s)(-1)^p  e_p(u-s-1)(te^{-\partial_u})^{s+p}
\\
=\sum_m\left( \sum_{s=0}^{\infty} h_s(u-s)(-1)^{m-s}  e_{m-s}(u-s-1) \right) (te^{-\partial_u})^{m}=1.
\end{align*}

\end{remark}
\begin{remark}
From definition (\ref{Tm}),  it follows that infinite  matrix  $\mathcal{E}$ and infinite formal  series $YE(u|t)$ contain only finite number of non-zero terms  $e_k(u)= e^N_{k, n}(u,a ) (g)$. 
\end{remark}
\begin{remark}
In    \cite{MTV1} common eigenvalues and common  eigenvectors of higher transfer matrices  of an  XXX-type model are studied through a certain  generating function of transfer matrices with coefficients in  the Yangian  $Y(\fgl_N(\bC))$  (see formula (4.16) in \cite{MTV1}), and   earlier in \cite{Tal},  the higher 
 transfer matrices of the Gaudin model were constructed  explicitly as a limit of  that generating function.  Up to a  multiplication by a rational function of $u$, the generating function $YE(u|t)$  is the image of  the  generating function  considered in  \cite{MTV1}, \cite{Tal}  under the evaluation representation  $\bC_N(a_1)\otimes\dots\otimes \bC_N (a_n)$ of  $Y(\fgl_N(\bC))$.
\end{remark}

\subsection{ Transfer matrices labeled by integer vectors}\label{sec_int}
Formula (\ref{Tdet1}) allows us  to extend the notion of $T_\alpha(u)$  for any  integer vector $\alpha=(\alpha_1,\dots, \alpha_l)$, $\alpha_i\in \bZ$,
by literally  setting 
\begin{align}\label{hhdet}
T_{\alpha}(u)=\det [ h_{\alpha_i-i+j}(u-j+1)].
\end{align}
Note  that 
\begin{align}\label{transposition}
 T_{(\dots, \alpha_i, \alpha_{i+1},\dots)}(u)= - T_{(\dots, \,\alpha_{i+1} -1\,, \,\alpha_{i}+1\, ,\dots)}(u), %\label{a1}
\end{align}
 and  that $T_\alpha(u)=0$ whenever  $\alpha_i -i= \alpha_j-j$  for some $i, j$.

Similarly, we can use formula (\ref{Tdet2}) to extend the definition of transfer  matrices to integer vectors.
Set
\begin{align}\label{eedet1}
 T_{\alpha^\prime}(u)=\det_{i,j=1,\dots,k}[ e_{\alpha_i-i+j}(u+j-1)].
 \end{align}
 While we do not define   here  the conjugation on  arbitrary integer vector,  the notation $T_{\alpha^\prime}(u)$ for the expression (\ref{eedet1}) is justified by the following lemma. 
\begin{lemma}
Let $\alpha=(\alpha_1,\dots, \alpha_l)$ be an integer vector, let $\rho=\rho_l= (0,1,\dots,l-1)$. Let $T_\alpha(u)$ and $ T_{\alpha^\prime} (u)$  be defined by  (\ref{hhdet}), (\ref{eedet1}). Then
\begin{align}
 T_\alpha(u)=
 \begin{cases}
 (-1)^{\sigma} T_\lambda(u),&\quad \text{if} \quad \alpha-\rho=\sigma(\lambda-\rho)\,\text{for some  partition}\, \lambda \,\text{and}\, \sigma\in S_l, 
 \\
 0,&\text{otherwise.}
  \end{cases}
  \end{align}
 \begin{align}
  T_{\alpha^\prime}(u)=
 \begin{cases}
 (-1)^{\sigma} T_{\lambda^\prime}(u),&\quad \text{if} \quad \alpha-\rho=\sigma(\lambda-\rho)\,\text{for some  partition}\, \lambda \,\text{and}\, \sigma\in S_l, 
 \\
 0,&\text{otherwise.}
 \end{cases}
\end{align}
\end{lemma}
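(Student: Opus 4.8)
The plan is to prove both cases simultaneously by the same combinatorial argument, since (\ref{hhdet}) and (\ref{eedet1}) have the identical shape of a determinant whose $(i,j)$ entry depends on $\alpha_i-i+j$ (with a harmless opposite shift of the spectral parameter argument). So I will first reduce to a purely formal statement about determinants $D_\alpha=\det[f_{\alpha_i-i+j}]$ where $f_k$ is any sequence of "symbols" (here $f_k(u)=h_k(u-j+1)$, resp. $e_k(u+j-1)$; the $u$-shift rides along the column index and does not interfere with the permutation argument). The key observation is that the vector $\beta:=\alpha-\rho=(\alpha_1,\alpha_2-1,\dots,\alpha_l-(l-1))$ records the "content" $\alpha_i-i+1$ of each row, and permuting the rows of the determinant by $\sigma\in S_l$ both multiplies $D_\alpha$ by $\mathrm{sgn}(\sigma)$ and permutes the entries of $\beta$.

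Concretely, the steps are: (1) Observe $T_{\sigma\cdot\alpha}(u)=(-1)^\sigma T_\alpha(u)$ where $\sigma$ acts on $\alpha=\beta+\rho$ by $\sigma\cdot\alpha=\sigma(\beta)+\rho$; this is just the antisymmetry of the determinant under row permutation, and (\ref{transposition}) is the special case of an adjacent transposition. (2) If two entries of $\beta$ coincide, i.e. $\alpha_i-i=\alpha_j-j$ for some $i\neq j$, then two rows of the determinant are equal, so $T_\alpha(u)=0$; this is already noted in the text after (\ref{hhdet}). (3) If all entries of $\beta$ are distinct, sort them into strictly decreasing order: there is a unique $\sigma\in S_l$ with $\sigma(\beta)$ strictly decreasing. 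Write $\sigma(\beta)=\mu-\rho$ for the resulting vector $\mu=\sigma(\beta)+\rho$; strict decrease of $\mu-\rho$ is equivalent to $\mu$ being weakly decreasing, i.e. $\mu_1\ge\mu_2\ge\cdots\ge\mu_l$. (4) If moreover $\mu_l\ge 0$, then $\mu=\lambda$ is a genuine partition and by step (1) $T_\alpha(u)=(-1)^\sigma T_\lambda(u)$ with $\alpha-\rho=\sigma^{-1}(\lambda-\rho)$; relabelling $\sigma\leftrightarrow\sigma^{-1}$ (and using $\mathrm{sgn}(\sigma)=\mathrm{sgn}(\sigma^{-1})$) gives exactly the stated formula. (5) The remaining possibility is that $\mu-\rho$ is strictly decreasing but $\mu_l<0$; I must show $T_\alpha(u)=0$ in that case. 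Here I use the convention $h_k(u)=0$ for $k<0$ (resp. $e_k(u)=0$ for $k<0$): if $\mu_l<0$ then the last row of $\det[h_{\mu_i-i+j}(u-j+1)]$ is $h_{\mu_l-l+j}(\cdots)$ for $j=1,\dots,l$, and $\mu_l-l+j\le\mu_l-l+l=\mu_l<0$, so the entire last row vanishes and the determinant is $0$. By step (1) again $T_\alpha(u)=\pm T_\mu(u)=0$.

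The only real subtlety — and the step I expect to cost the most care — is making the bookkeeping of step (4)/(5) airtight: namely that "$\beta=\alpha-\rho$ has distinct entries and its decreasing rearrangement $\mu-\rho$ satisfies $\mu_l\ge 0$" is exactly the condition "$\alpha-\rho=\sigma(\lambda-\rho)$ for some partition $\lambda$ and some $\sigma\in S_l$", and that the partition $\lambda$ and the sign $(-1)^\sigma$ so produced are well-defined (independent of how one writes $\sigma$, which is automatic since $\lambda-\rho$ has distinct entries so the stabilizer is trivial). One should also remark that if $\alpha-\rho=\sigma(\lambda-\rho)$ holds then necessarily the entries of $\alpha-\rho$ are distinct, so the dichotomy of the lemma is genuinely exhaustive and the two branches are mutually exclusive. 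I would phrase all of this for the $h$-determinant (\ref{hhdet}) and then simply remark that (\ref{eedet1}) is handled verbatim, replacing $h$ by $e$ and $u-j+1$ by $u+j-1$ throughout — the permutation acts on row indices only, so the column-dependent spectral shift never enters, and the vanishing-when-$\mu_l<0$ argument uses only $e_k=0$ for $k<0$.
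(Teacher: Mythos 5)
Your argument is correct and is essentially the paper's own reasoning: the paper states the lemma as an immediate consequence of the row-antisymmetry observation (\ref{transposition}) and the vanishing when $\alpha_i-i=\alpha_j-j$, which is exactly your steps (1)--(4). Your step (5), using $h_k(u)=0$ (resp. $e_k(u)=0$) for $k<0$ to kill the sorted determinant when the rearranged vector is not of the form $\lambda-\rho$ for a partition $\lambda$, is the only detail the paper leaves tacit, and you supply it correctly.
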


\begin{figure}
% Use the relevant command to insert your figure file.
% For example, with the graphicx package use
\includegraphics[width=50mm]{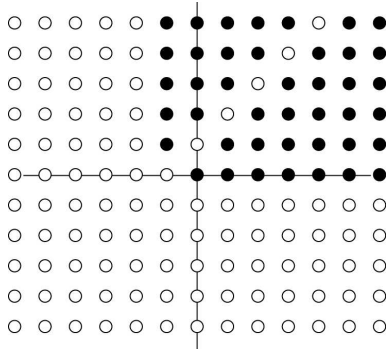}\\
% figure caption is below the figure
\caption{Non-zero values of  $T_{(\alpha_1,\alpha_2)}(u)$ for $N\ge 2$.}
\label{fig:1}       % Give a unique label
\end{figure}
Figure \ref{fig:1} illustrates the distribution of non-zero values of $T$-operators for integer vectors $(\alpha_1, \alpha_2)$ for $N\ge 2$.
Black  points   represent integer vectors $(\alpha_1,\alpha_2)$  with non-vanishing   transfer matrices $T_{(\alpha_1,\alpha_2)}(u)$.

\subsection{Generating  functions of $T$-operators} %%%%%%%%%%%%
Set
 \begin{align*}
 H^N(x| u)= \sum_{k=0}^{\infty} h^N_k(u) x^k, \quad 
  E^N(x| u)= \sum_{k=0}^{\infty}(-1)^k e^N_k(u) x^k.
 \end{align*}
 \begin{proposition}
For any  integer  vector $\alpha$, the  transfer matrix $T^N_\alpha(u)$  is the coefficient of $x_1^{\alpha_1}\dots x_l^{\alpha_l-l+1}$ in
\begin{align}\label{H11}
H^N(x_1,\dots, x_l|u)&=\det \left[{x_i^{-j+1}}H^N(x_i|u-j+1)\right]\\
&=\left.\left(\prod_{1\le i<j\le l}\left( \frac{ e^{-\partial_{u_j} }  }  {x_j}-\frac{ e^{-\partial_{u_i}}} {x_i}\right)\prod_{i=1}^{l}H^N(x_i|u_i)\right)\right\vert_{u_1=u_2=\dots= u_l=u}.\notag
\end{align}
Similarly,  $ (-1)^{A}T^N_{\alpha^\prime}(u)$  with $A=\sum(\alpha_i-i+1)$ is the coefficient of $x_1^{\alpha_1}\dots x_l^{\alpha_l-l+1}$ in
\begin{align}\label {E11}
E^N(x_1,\dots, x_l|u)&=\det \left[{(-x_i)^{-j+1}}E^N(x_i|u+j-1)\right]\\
&=\left.\left(\prod_{1\le i<j\le l}\left( \frac{ e^{\partial_{u_i} }  }  {x_i}-\frac{ e^{\partial_{u_j}}} {x_j}\right)\prod_{i=1}^{l}E^N(x_i|u_i) \right)\right\vert_{u_1=u_2=\dots= u_l}.
\notag
\end{align}

 \end{proposition}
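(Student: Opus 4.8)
The plan is to obtain both displayed equalities in (\ref{H11}), and their counterparts in (\ref{E11}), directly from the Leibniz expansion of the determinants, using two elementary facts: that the $j$-th column of each matrix is gotten from the first by the shift $u\mapsto u-j+1$ (resp.\ $u\mapsto u+j-1$), which can be written as a power of a shift operator in an auxiliary variable; and that a determinant whose $(i,j)$ entry has the form $\xi_i^{\,j-1}f_i$, with the $\xi_i$ mutually commuting, equals the Vandermonde product $\prod_{i<j}(\xi_j-\xi_i)$ applied to $\prod_i f_i$. I will spell out the $H^N$ case; the $E^N$ case runs identically, the only extra work being the sign bookkeeping, which I return to at the end.

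\textbf{Coefficient identification.} The $(i,j)$ entry $x_i^{-j+1}H^N(x_i|u-j+1)=\sum_{k\ge 0}h^N_k(u-j+1)\,x_i^{\,k-j+1}$ has $x_i^{\beta}$-coefficient $h^N_{\beta+j-1}(u-j+1)$, which vanishes when $\beta+j-1<0$ in accordance with the conventions on $h^N_k$. Expanding $\det\bigl[x_i^{-j+1}H^N(x_i|u-j+1)\bigr]$ by the Leibniz rule and reading off the coefficient of $x_1^{\alpha_1}\cdots x_l^{\alpha_l-l+1}=\prod_i x_i^{\alpha_i-i+1}$, i.e.\ choosing $\beta=\alpha_i-i+1$ in row $i$, gives $\sum_{\sigma\in S_l}\operatorname{sgn}(\sigma)\prod_i h^N_{\alpha_i-i+\sigma(i)}(u-\sigma(i)+1)=\det\bigl[h^N_{\alpha_i-i+j}(u-j+1)\bigr]$, which is $T^N_\alpha(u)$ by (\ref{hhdet}). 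This proves the first equality in (\ref{H11}).

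\textbf{Vandermonde factorization.} Writing $H^N(x_i|u-j+1)=\bigl(e^{-\partial_{u_i}}\bigr)^{j-1}H^N(x_i|u_i)\big\vert_{u_i=u}$ and setting $\xi_i:=x_i^{-1}e^{-\partial_{u_i}}$, the $(i,j)$ entry becomes $\xi_i^{\,j-1}H^N(x_i|u_i)$ evaluated at $u_1=\dots=u_l=u$. The operators $\xi_1,\dots,\xi_l$ act on pairwise disjoint variables, so they commute with one another and with every $H^N(x_m|u_m)$, $m\ne i$; hence in each term of the Leibniz sum the operator $\xi_i^{\sigma(i)-1}$ sitting in front of its own factor $H^N(x_i|u_i)$ may be freely moved to the left, and the whole determinant collapses to $\bigl(\sum_\sigma\operatorname{sgn}(\sigma)\prod_i\xi_i^{\sigma(i)-1}\bigr)\prod_i H^N(x_i|u_i)=\prod_{1\le i<j\le l}(\xi_j-\xi_i)\cdot\prod_i H^N(x_i|u_i)$, the last equality being the usual Vandermonde evaluation inside the commutative polynomial algebra generated by $\xi_1,\dots,\xi_l$. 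Since $\xi_j-\xi_i=\frac{e^{-\partial_{u_j}}}{x_j}-\frac{e^{-\partial_{u_i}}}{x_i}$, imposing $u_1=\dots=u_l=u$ yields precisely the second line of (\ref{H11}). For (\ref{E11}) one repeats both steps with $\xi_i$ replaced by $\eta_i:=(-x_i)^{-1}e^{\partial_{u_i}}$, noting that $(-x_i)^{-j+1}E^N(x_i|u+j-1)=\eta_i^{\,j-1}E^N(x_i|u_i)\big\vert_{u_i=u}$ and $\eta_j-\eta_i=\frac{e^{\partial_{u_i}}}{x_i}-\frac{e^{\partial_{u_j}}}{x_j}$; extracting the coefficient of $\prod_i x_i^{\alpha_i-i+1}$ from the Leibniz expansion then produces $(-1)^{\sum_i(\alpha_i-i+1)}\det\bigl[e^N_{\alpha_i-i+j}(u+j-1)\bigr]=(-1)^A\,T^N_{\alpha'}(u)$ via (\ref{eedet1}).

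The one point I expect to require real care is the reconciliation of the ``function'' determinant on the left of (\ref{H11}) with the ``operator'' expression on the right: one must introduce the independent variables $u_1,\dots,u_l$, let each shift operator $e^{\mp\partial_{u_i}}$ act only on the $i$-th factor, carry out the operator-valued Vandermonde evaluation \emph{before} any specialization — so that it legitimately takes place in the commutative algebra on the symbols $\xi_i$ (resp.\ $\eta_i$) — and only afterwards set $u_1=\dots=u_l=u$. Apart from this, the sole delicate bit is tracking the three sources of signs in the $E^N$ case (the $(-1)^k$ in the definition of $E^N$, the factors $(-x_i)^{-j+1}$, and the overall $(-1)^A$); once these are organized, the argument is complete.
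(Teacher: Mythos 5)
Your proof is correct and follows essentially the same route as the paper: the Leibniz/permutation expansion identifies the determinant with the generating series of $T^N_\alpha(u)$ (you extract coefficients, the paper sums the series — the same computation read in opposite directions), and the second equality is obtained exactly as in the paper by introducing independent variables $u_1,\dots,u_l$ and evaluating a Vandermonde determinant in the commuting operators $x_i^{-1}e^{-\partial_{u_i}}$ acting on $\prod_i H^N(x_i|u_i)$ before specializing $u_1=\dots=u_l=u$. Your explicit sign bookkeeping in the $E^N$ case (yielding the row factor $(-1)^{\alpha_i-i+1}$ and hence $(-1)^A$) correctly fills in what the paper dismisses as ``exactly the same.''
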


\begin{proof}
The first statement of (\ref{H11}) follows from the  expansion  of the determinant as a sum over permutations:
\begin{align*}
&\sum_{\alpha\in \bZ^{l}}T^N_\alpha(u) x_1^{\alpha_1}\dots x_l^{\alpha_l-l+1}=
\sum_{\alpha\in \bZ^{l}}\det [h^N_{\alpha_i-i+j}(u-j+1)] x_1^{\alpha_1}\dots x_l^{\alpha_l-l+1}\\
&= \sum_{\alpha\in \bZ^{l}}\sum_{\sigma\in S_l} (-1)^{\sigma}h^N_{\alpha_1-1+\sigma(1)}(u-\sigma(1)+1)\dots h^N_{\alpha_l-l+\sigma(l)}(u-\sigma(l)+1) x_1^{\alpha_1}\dots x_l^{\alpha_l-l+1}\\
&=\sum_{\sigma\in S_l}\sum_{(a_1,\dots, a_l)\in \bZ^{l}} (-1)^{\sigma}h^N_{a_1}(u-\sigma(1)+1)\dots h^N_{a_l}(u-\sigma(l)+1) x_1^{a_1-\sigma(1)+1}\dots x_l^{a_l-\sigma(l)+1}\\
&=\sum_{\sigma\in S_l} (-1)^{\sigma}H^N(x_1|u-\sigma(1)+1)\dots H^N(x_l|u-\sigma(l)+1) x_1^{-\sigma(1)+1}\dots x_l^{-\sigma(l)+1}\\
&=\det \left[{x_i^{-j+1}}H^N(x_i|u-j+1)\right].
\end{align*}
For the second part of (\ref{H11}), which can be  viewed as a  generalization of (\ref{fsym}),
consider a set of independent  variables  $(u_1,\dots, u_l)$. Define
$$H^N(x_1,\dots, x_l|u_1,\dots,u_l)=\det \left[{x_i^{-j+1}}H^N(x_i|u_i-j+1)\right].$$ 

Then
\begin{align*}
H^N(x_1,\dots, x_l|u_1,\dots  u_l)
=\sum_{\sigma\in S_l}(-1)^{\sigma}\left({x_1}{ e^{\partial_{u_1}}}\right) ^{1-\sigma(1)}H^N(x_1|u_1) \dots\left( {x_l}e^{\partial_{u_l}}\right) ^{1-\sigma(l)}H^N(x_1|u_1) 
\\
=\det[  (x_i e^{\partial_{u_i}}) ^{1-j}] \prod_{i=1}^{l}H^N(x_i|u_i)
=\prod_{1\le i<j\le l}\left( \frac{ e^{-\partial_{u_j} }  }  {x_j}-\frac{ e^{-\partial_{u_i}}} {x_i}\right)\prod_{i=1}^{l}H^N(x_i|u_i),
\end{align*}
and the second  statement follows. 
The  proof of (\ref{E11}) is exactly the same. 
\end{proof}

%%%%%%%%%%
\section{Clifford algebra action on the space of  sequences of transfer matrices } 
\subsection{Clifford algebra action on symmetric functions}
Our next goal is to  define the action of Clifford algebra on a space generated by transfer matrices, which will allow us to establish  connections to the classical  boson-fermion correspondence and $\tau$-function formalism. 
Clifford algebra  $Cl$ is generated by elements $\{\psi^{\pm}_k\}_{k\in \bZ}$  with the relations
\begin{align}
\psi^+_k\psi^-_l+\psi^-_l\psi^+_k=\delta_{k,l},\quad
\psi^\pm_k\psi^\pm_l+\psi^\pm_l\psi^\pm_k=0. \label{pmrel}
\end{align}
This infinite   Clifford algebra   originates from  the vector space  $W=(\oplus_i\bC\psi^+_i)\bigoplus (\oplus_i\bC\psi^-_i)$  with the bilinear form  that is defined by  $(\psi^+_i,\psi^-_j)=\delta_{ij}$, and the rest of the values being zeros.

The crucial component  of the classical  boson-fermion correspondence   refers to the action  of $Cl$ on the  space of symmetric functions, which we reproduce here in purely combinatorial terms.
Let  $z$  be a formal  variable and let   $\B^{(m)}$ be the  linear  span of  the set  of  basis  vectors $\{ z^ms_\lambda\}$, where  $s_\lambda$ are  Schur symmetric functions  taken over all  partitions  $\lambda$. Note that each $\B^{(m)}$ is just  a copy of the ring of symmetric functions. Set  
$$
\B= \oplus_{m\in\bZ} \B^{(m)}.
$$
The action of  operators $\psi^\pm_{k}$ on   $\{z^m s_\lambda\}$  is given by the following rules:
\begin{align}
\psi^+_{k} (z^ms_\lambda) &= z^{m+1}s_{(k-m-1,\lambda)} ,\label{p10}\\ % OK
\psi^-_k (z^ms_\lambda)&=z^{m-1}\sum_{t=1}^{\infty}(-1)^{t+1}\delta_{k-m-1,\,\lambda_t-t}\,  s_{(\lambda_1+1,\dots, \lambda_{t-1}+1, \lambda_{t+1},\lambda_{t+2}\dots) },  \label{p100} %OK
\end{align}
where again  for any integer vector $\alpha$ of $l$ parts, one sets  $s_\alpha= (-1)^{\sigma}s_\lambda$, if $\alpha-\rho= \sigma (\lambda-\rho)$ for some partition $\lambda$  and some permutation  $\sigma$, and $s_\alpha=0$ otherwise. 
Note that   only one term survives in the sum (\ref{p100}).  One of the easiest ways to check that  operators $\psi^\pm_{k}$ satisfy the relations of Clifford algebra is through  identification of basis elements $z^m s_\lambda$ with  semi-infinite  monomials of  fermionic Fock space 
$$
z^m s_\lambda \sim v_{m+\lambda_1}\wedge v_{m-1+\lambda_2}\wedge  v_{m-2+\lambda_3}\wedge\dots.
$$
Here $\{v_i\}_{i\in \bZ}$ is a linear basis of a vector space $V=\oplus_{i\in \bZ}\bC v_i$.  
Operators   $\psi^\pm_{k}$  are creation and annihilation operators on the linear span of such monomials (see e.g. \cite{D1}, \cite{K} or \cite{Bom}  for more details).

Formulae  (\ref{Tdet1}), (\ref{Tdet2}) suggest that we can define Clifford algebra action  on the space of  transfer matrices as well. This construction is described below, and first, we would like to explain the necessity of  some adjustments. Recall that  definition (\ref{Tm})  of transfer matrix $T_\lambda(u)=T_{\lambda, n}^N (u, a)(g)$ depends on the size $N$  of an invertible  matrix $g$. For consistency   we are forced to set $T_{\lambda,n}^N(u,a)(g)=0$   whenever the number $l$ of parts of $\lambda$
 is greater than $N$. In that sense, transfer matrices 
$T_{\lambda,n}^N(u, a) (g)$  create  natural analogue of Schur symmetric polynomials rather  than  Schur symmetric functions, which is not  yet enough to construct a $Cl$-module of the type $\B$.  Hence,  to get  the analogue of Schur symmetric functions, we  have to consider the sequences of  the transfer matrices 
$\{T_{\lambda, n}^N (u, a)(g)\}_{N=1}^{\infty}$. Since we deal only with  linear vector space structure, the questions of stability  of these  sequences, important for the ring structure of symmetric functions, seem to be of  no  concern for us here. 
\subsection{Clifford algebra action on the space of sequences of transfer matrices.}
Let  $n\in \bN$  and  $a=(a_1,\dots, a_n)$ be fixed parameters. As above,   $h_k(u)=h_{k,n}^N(u,a)$  is  a function from $GL_N(\bC)$ to $\text{Hom}( \bC^N)^{\otimes n}\left[1/u\right]$, defined by  
\begin{align}
h_{k,n}^N(u,a) (g)=tr_{Sym^k\bC^N} (R_{01} (u-a_1)\dots R_{0n} (u-a_n) Sym^k(g)\otimes Id^n),
\end{align} 
with $R$-matrix  (\ref{Rm}) for $V_\lambda= Sym^k\bC^N$.
Recall that for  any partition  $\lambda= (\lambda_1\ge\lambda_2\ge\dots\ge\lambda_l>0)$,
\begin{align*}
	T^N_{\lambda,n}(u,a) (g)=\begin{cases}
	\det_{i,j=1,\dots,l} [h^N_{\lambda_i-i+j,n}(u-j+1,a)(g)] , & \text{if} \quad l\le N,\\
	0,&\text{if} \quad l> N
	\end{cases}
	\end{align*}
	defines   a function $T^N_{\lambda,n}(u,a)$ from $GL_N(\bC)$ to $\text{Hom}( \bC^N)^{\otimes n}\left[1/u\right]$. 
	
Set ${\bf t}_\lambda (u)$ to be the sequence of  functions  $T^N_{\lambda,n}(u,a)$ 	
\begin{align*}
{\bf t}_\lambda(u)=\,(T^N_{\lambda,n}(u,a))_{N=1}^{\infty}\,= \,(0, \dots, 0,\, T^{l}_{\lambda,n}(u,a),\, T^{l+1}_{\lambda,n}(u,a),\,...).
\end{align*}

In the view of the  Remark \ref{rem1},  ${\bf t}_\lambda(u)$ form a set  of linearly independent elements. 
We set $\tilde \B^{(0)}$
to be the linear span of all  ${\bf t}_\lambda(u)$ over all partitions $\lambda$. Set
$
\tilde \B^{(m)} = z^m\tilde \B^{(0)} 
$
for $m\in\bZ$
and $\tilde \B=\bigoplus_{m\in\bZ}\tilde \B^{(m)} $
Along the lines of Section \ref{sec_int}, we extend the definition of sequences of transfer matrices to the ones labeled by integer vectors of $l$ parts  by setting
\begin{align*}
{\bf t}_\alpha(u)=
 \begin{cases}
 (-1)^{\sigma} {\bf t}_\lambda(u),&\quad \text{if} \quad \alpha-\rho=\sigma(\lambda-\rho)\,\text{for a partition}\, \lambda \,\text{and }\, \sigma\in S_l, 
 \\
 0,&\text{otherwise.}
  \end{cases}
  \end{align*} 
  We have: 
\begin{align}\label{tr}
{\bf t}_{(\dots, \alpha_i, \alpha_{i+1},\dots)}(u)= - {\bf t}_{(\dots, \,\alpha_{i+1} -1\,, \,\alpha_{i}+1\, ,\dots)}(u).
\end{align}
We also extend the conjugation  operation on the sequences  labeled by  integer vectors:
\begin{align}
 {\bf t}_{\alpha^\prime}(u)=
 \begin{cases}
 (-1)^{\sigma} {\bf t}_{\lambda^\prime}(u),&\quad \text{if} \quad \alpha-\rho=\sigma(\lambda-\rho)\,\text{for a partition}\, \lambda \,\text{and }\, \sigma\in S_l, 
 \\
 0,&\text{otherwise.}
  \end{cases}
  \end{align}

Then the action of $Cl$  on   $\tilde \B$ can be defined exactly by the same formulae as  the action of this algebra on the boson space of symmetric functions:
\begin{align}
\psi^+_{k} (z^m{\bf t}_\lambda (u)) &= z^{m+1}{\bf t}_{(k-m-1,\lambda)}(u) ,\label{Tcl1}\\ % OK
\psi^-_k (z^m{\bf t}_\lambda(u) )&=z^{m-1}\sum_{s=1}^{\infty}(-1)^{s+1}\delta_{k-m-1,\,\lambda_s-s}\, {\bf t}_{(\lambda_1+1,\dots, \lambda_{s-1}+1, \lambda_{s+1},\lambda_{s+2}\dots) } (u).  \label{Tcl2} %OK
\end{align}
The property (\ref{tr}) implies that this is a well-defined action of  $Cl$ on $\tilde\B$, and we have an isomorphism  of $Cl$-modules.
\begin{proposition}
The map from $\varphi: \B \to \tilde \B$, defined on the basis vectors  $\varphi(s_\lambda z^m)= t_\lambda(u)z^m$ defines the $Cl$-module isomorphism.
\end{proposition}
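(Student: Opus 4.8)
The plan is to verify that $\varphi$ is a well-defined linear bijection and that it intertwines the two $Cl$-actions. Well-definedness and bijectivity are essentially immediate: by construction $\{{\bf t}_\lambda(u)z^m\}$ over all partitions $\lambda$ and all $m\in\bZ$ is a spanning set for $\tilde\B$, and Remark \ref{rem1} (the $u\to\infty$ limit sends $T^N_{\lambda,n}(u,a)(g)$ to $s_\lambda(g_1,\dots,g_N)\,Id^{\otimes n}$, and Schur polynomials of unbounded numbers of variables are linearly independent) shows the sequences ${\bf t}_\lambda(u)$ are linearly independent. Hence $\varphi$, sending the basis $\{s_\lambda z^m\}$ of $\B$ bijectively to the basis $\{{\bf t}_\lambda(u)z^m\}$ of $\tilde\B$, is a well-defined linear isomorphism of vector spaces.

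The real content is checking $\varphi\circ\psi^\pm_k = \psi^\pm_k\circ\varphi$ as maps $\B\to\tilde\B$. Since both sides are linear, it suffices to check this on each basis vector $z^m s_\lambda$. First I would compare the two defining formulae: the action on $\B$ is given by (\ref{p10})--(\ref{p100}) and the action on $\tilde\B$ by (\ref{Tcl1})--(\ref{Tcl2}), and these are \emph{literally the same formulae} with $s_\bullet$ replaced by ${\bf t}_\bullet$. So for $\psi^+_k$:
\begin{align*}
\varphi(\psi^+_k(z^m s_\lambda)) = \varphi(z^{m+1}s_{(k-m-1,\lambda)}) = z^{m+1}{\bf t}_{(k-m-1,\lambda)}(u) = \psi^+_k(z^m {\bf t}_\lambda(u)) = \psi^+_k(\varphi(z^m s_\lambda)),
\end{align*}
where the middle equality — that $\varphi(z^{m+1}s_\alpha) = z^{m+1}{\bf t}_\alpha(u)$ even when $\alpha=(k-m-1,\lambda)$ is merely an integer vector, not a partition — is exactly the point that needs justification. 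This follows because the extension of $s_\alpha$ to integer vectors and the extension of ${\bf t}_\alpha(u)$ to integer vectors are governed by \emph{the same} combinatorial rule (straightening via the $\rho$-shifted $S_l$-action, with the sign $(-1)^\sigma$, and zero when two shifted coordinates coincide); this is the content of (\ref{tr}) mirroring (\ref{transposition}), and ultimately traces back to the CBR determinant (\ref{Tdet1}) which makes ${\bf t}_\alpha(u)$ an alternating function of the shifted indices in precisely the way $s_\alpha$ is. The computation for $\psi^-_k$ is identical: apply $\varphi$ termwise to the sum in (\ref{p100}), note that each surviving integer-vector label $(\lambda_1+1,\dots,\lambda_{s-1}+1,\lambda_{s+1},\dots)$ maps under $\varphi$ to the corresponding ${\bf t}$-labelled vector by the same straightening convention, and match with (\ref{Tcl2}).

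The main obstacle — really the only non-formal step — is establishing that $\varphi$ respects the integer-vector straightening on the nose, i.e.\ that $\varphi$ defined on partition-indexed basis vectors automatically satisfies $\varphi(z^m {\bf s}_\alpha) = z^m{\bf t}_\alpha(u)$ for all integer vectors $\alpha$. One must check that the relation (\ref{tr}) on ${\bf t}_\alpha(u)$ is consistent (no vector gets assigned two conflicting values) and that the vanishing locus $\alpha_i - i = \alpha_j - j$ matches the Schur case; both are consequences of the determinant formula (\ref{hhdet}) together with the Lemma of Section \ref{sec_int}, which already tells us ${\bf t}_\alpha(u) = (-1)^\sigma{\bf t}_\lambda(u)$ or $0$ under exactly the rule used to define $s_\alpha$. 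Granting that, the intertwining is a one-line check on each generator and each basis vector, and the proposition follows. I would also remark that once $\varphi$ is shown to intertwine all the $\psi^\pm_k$, it automatically intertwines the whole $Cl$-action since these generate $Cl$.
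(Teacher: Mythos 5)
Your proposal is correct and follows essentially the same route as the paper, which treats the proposition as immediate: the two actions are given by literally the same formulae (\ref{p10})--(\ref{p100}) and (\ref{Tcl1})--(\ref{Tcl2}), the integer-vector straightening conventions for $s_\alpha$ and ${\bf t}_\alpha(u)$ coincide (property (\ref{tr})), and the ${\bf t}_\lambda(u)$ are linearly independent by Remark \ref{rem1}. You simply spell out the routine verifications (bijectivity on bases and the intertwining check on the generators $\psi^\pm_k$) that the paper leaves implicit.
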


These statements follow immediately.
\begin{proposition} 
  (a) Let $ 1=(1, 1,1,\dots)\in \tilde \B ^{(0)}$ be the vacuum vector of the graded component $\tilde \B ^{(0)}$ of the bosonic space of sequences of transfer matrices, and let $\lambda=(\lambda_1\ge\lambda_2\ge\dots \ge \lambda_l\ge0)$ be a partition.  
Then 
\begin{align*}
\psi^+_{\lambda_1+l}\dots \psi^+_{\lambda_l+1} (1)&= z^{l} {\bf t}_\lambda,\quad \\
\psi^-_{-\lambda_l-l+1}\dots \psi^-_{-\lambda_1} (1)&= z^{-l} (-1)^{\sum\lambda_i} {\bf t}_{(1^{(\lambda_1-\lambda_2)},2^{(\lambda_2-\lambda_3)},\,\dots\,, l^{\lambda_l})},
\end{align*}
where we use $(1^{(a_1)},2^{(a_2)}\dots)$ notation for a partition that has $a_1$ parts of length $1$, $a_2$ parts of length $2$, etc. 

 (b)  Let $\tau=\sum_\lambda c_\lambda s_\lambda $   be a solution of 
\begin{align}\label {H1}
\sum_{k\in \bZ}\psi^+_k( \tau)\otimes \psi^{-}_{k} (\tau)=0
\end{align}
for the  Clifford algebra action on the boson space $\B$ of symmetric functions. 
Then $ {\tilde \tau}  =\sum_\lambda c_\lambda {\bf t}_\lambda (u)= \left(\sum_\lambda c_\lambda T^N_{\lambda, n} (u,a)\right)_{N=1}^{\infty}$  is a solution of the  equation 
$$
\sum_{k\in \bZ}\psi^+_k( \tilde \tau (u))\otimes \psi^-_{k} (\tilde\tau(v))=0
$$
for  the  Clifford algebra action on the  boson space $\tilde \B$  of sequences of transfer matrices. (cf.   \cite{Z1}, \cite{Z3}, \cite{Z2},  \cite{Z4}).
\end{proposition}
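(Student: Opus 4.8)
The plan is to reduce everything to the $Cl$-module isomorphism $\varphi\colon\B\to\tilde\B$ established in the preceding proposition, so that both statements become transparent transcriptions of known facts about symmetric functions and the classical boson--fermion correspondence.

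For part (a), I would first recall the semi-infinite monomial realization already quoted in the excerpt: under $\varphi$ the basis vector $z^m{\bf t}_\lambda(u)$ corresponds to the fermionic monomial $v_{m+\lambda_1}\wedge v_{m-1+\lambda_2}\wedge\cdots$, and $\psi^+_k$, $\psi^-_k$ act as the standard creation/annihilation operators. The first identity then follows by repeatedly applying the creation rule \eqref{Tcl1}: starting from $\mathbf{1}=z^0{\bf t}_\emptyset(u)$, the operator $\psi^+_{\lambda_l+1}$ prepends the part $\lambda_l+1-0-1=\lambda_l$ (with the grading going up by one), and inductively $\psi^+_{\lambda_j+j}$ applied to $z^{l-j}{\bf t}_{(\lambda_{j+1},\dots,\lambda_l)}(u)$ prepends $\lambda_j+j-(l-j)-1$; one then checks this is exactly $\lambda_j$ after accounting for the shift in the grading index at each stage, and that the successively built integer vectors are already partitions so no sign corrections via \eqref{tr} intervene. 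The second identity is the ``dual'' computation with $\psi^-_k$: each $\psi^-$ removes a box in the sense of \eqref{Tcl2}, and after all $l$ applications one lands on the conjugate-type partition $(1^{(\lambda_1-\lambda_2)},\dots,l^{(\lambda_l)})$ with the grading dropped to $-l$; the sign $(-1)^{\sum\lambda_i}$ is the accumulated product of the $(-1)^{s+1}$ factors together with the reordering signs from \eqref{tr} when the intermediate integer vectors must be straightened into partitions. Since the analogous statements for $s_\lambda\in\B$ are classical, the cleanest route is: verify the $\B$-version (or cite it), then push it forward through $\varphi$.

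For part (b), the key observation is that $\varphi$ intertwines the $Cl$-actions, hence intertwines the bilinear operators appearing on both sides of \eqref{H1}. Concretely, if $\tau=\sum_\lambda c_\lambda s_\lambda$ satisfies $\sum_k\psi^+_k(\tau)\otimes\psi^-_k(\tau)=0$ in $\B\otimes\B$, then applying $\varphi\otimes\varphi$ and using $\varphi\psi^\pm_k=\psi^\pm_k\varphi$ gives $\sum_k\psi^+_k(\tilde\tau)\otimes\psi^-_k(\tilde\tau)=0$ in $\tilde\B\otimes\tilde\B$, where $\tilde\tau=\varphi(\tau)=\sum_\lambda c_\lambda{\bf t}_\lambda(u)$. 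The only subtlety is the appearance of two spectral parameters $u$ and $v$: since the $Cl$-action defined by \eqref{Tcl1}--\eqref{Tcl2} does not touch the variable $u$ at all (it merely relabels partitions and shifts the $z$-grading), $\tilde\tau(u)$ and $\tilde\tau(v)$ are just two copies of the same element of $\tilde\B$ sitting in the two tensor factors, and the identity $\sum_k\psi^+_k(\tilde\tau(u))\otimes\psi^-_k(\tilde\tau(v))=0$ holds for all $u,v$ for exactly the same algebraic reason. I would phrase this by noting that the sum is finite (Remark on finiteness of $\mathcal{E}$, or simply that each ${\bf t}_\lambda$ lies in a single graded component and $\psi^\pm_k$ shifts grading), so no convergence issue arises and the termwise image under $\varphi\otimes\varphi$ is legitimate.

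The main obstacle, such as it is, lies entirely in part (a): bookkeeping the signs and the grading shifts so that the intermediate integer vectors built up by the iterated $\psi^+$'s (resp. $\psi^-$'s) match the claimed partitions on the nose. Part (b) is essentially formal once one states clearly that the $Cl$-action ignores the spectral parameter; the honest content there is the pushforward via the already-proven isomorphism $\varphi$. I would therefore spend most of the written proof on the combinatorial identities of (a) — ideally by first recording them for $s_\lambda$ in $\B$, where they are the standard ``Schur function as a vertex-operator word'' formulas, and then transporting them — and dispatch (b) in a couple of lines.
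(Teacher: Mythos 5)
Your proposal is correct and matches the paper's intent: the paper offers no separate argument, stating that these facts ``follow immediately'' from the $Cl$-module isomorphism $\varphi:\B\to\tilde\B$ and the classical vertex-operator/boson--fermion identities for Schur functions, which is exactly your route of proving (or citing) the $\B$-versions and transporting them, and of pushing \eqref{H1} through $\varphi\otimes\varphi$ for (b). Only a cosmetic slip: in the induction for (a) the operator acting on $z^{l-j}{\bf t}_{(\lambda_{j+1},\dots,\lambda_l)}(u)$ is $\psi^+_{\lambda_j+(l+1-j)}$, not $\psi^+_{\lambda_j+j}$, which prepends $\lambda_j+(l+1-j)-(l-j)-1=\lambda_j$ as needed.
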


 \section{Notes on bosonisation}

Combine operators  $\psi^{\pm}_k$  into generating  functions
 \begin{align} \label{pp}
\Psi^+(x,m)=\sum_{k\in \bZ}^\infty \psi^+_k|_{ \B^{(m)}}x^{k},
\quad 
\Psi^-(x,m)=\sum_{k\in \bZ}^\infty \psi^-_k|_{ \B^{(m)}}x^{-k}.
\end{align}
Then the relations of Clifford algebra are 
\begin{align*} 
&\Psi^\pm(x, m\pm1)\Psi^\pm(y, m)+\Psi^\pm(y, m\pm1)\Psi^\pm(x, m)=0,\\
&\Psi^-(x, m+1)\Psi^+(y, m)+\Psi^+(y, m-1)\Psi^-(x, m)= \delta( x^{-1}, y^{-1}),
\end{align*}
with formal  distribution 
$
\delta(x^{-1},y^{-1})=\sum_{k\in \bZ}\frac{y^{k+1}}{x^k}.
$
 It is known (see e.g. proof in \cite{JR2}) that  the action of $\Psi^\pm(x,m)$ on vacuum vector $1$, the constant polynomial  in the boson space $\B^{(0)}$ of Schur functions, produces the generating functions of symmetric functions:
 \begin{align}
 \Psi^+(x_1, l-1)\dots \Psi^+(x_l, 0) (1)&= z^{l} x_1^l\dots x_l^l \sum s_\lambda x_1^{\lambda_1}\dots x_l^{\lambda_l+1-l},\\ 
  \Psi^-(x_1,- l+1)\dots \Psi^-(x_l, 0) (1)&= z^{-l} x_1^l\dots x_l^l \sum (-1)^A s_\lambda x_1^{\lambda_1}\dots x_l^{\lambda_l+1-l},
 \end{align}
where $A=\lambda_1+\dots +\lambda_l$.  Clifford algebra modules isomorphism immediately implies analogous statement for the action of these operators on $\tilde\B$.
\begin{proposition}
Let ${\bf 1}= (1)_{N=1}^{\infty}$ be the vacuum vector in  in the boson space $\tilde \B^{(0)}$ of sequences of transfer matrices. Then 
 \begin{align}
 \Psi^+(x_1, l-1)\dots \Psi^+(x_l, 0) ({\bf 1})&= \left(z^{l}\, x_1^l\dots x_l^l \,H^N(x_1,\dots,x_l|u) \right)_{N=1}^{\infty}\\ 
  \Psi^-(x_1,- l+1)\dots \Psi^-(x_l, 0) ({\bf1})&=\left( z^{-l} \, (-x_1)^l\dots  (-x_l)^l\, E^N(x_1,\dots, x_l|u)\right)_{N=1}^{\infty}, 
 \end{align}
 where $\Psi^\pm(x, m)$ are generating functions of the action of operators $\psi^{\pm}$ on the graded component $\tilde \B^{(m)}$.
\end{proposition}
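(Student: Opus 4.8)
The plan is to obtain the statement by transporting, along the $Cl$-module isomorphism $\varphi\colon\B\to\tilde\B$ of the preceding Proposition, the two displayed identities for the action of $\Psi^{\pm}$ on the vacuum $1\in\B^{(0)}$, and then identifying the resulting series with the generating functions $H^{N}(x_1,\dots,x_l|u)$ and $E^{N}(x_1,\dots,x_l|u)$ already computed in Section~2. First I would record the compatibility of $\varphi$ with the operators $\Psi^{\pm}(x,m)$: for each $m$ the operator $\Psi^{\pm}(x,m)$ is merely the formal generating series $\sum_{k}\psi^{\pm}_{k}x^{\pm k}$ of the Clifford generators restricted to the graded component of degree $m$, so, $\varphi$ being $Cl$-linear and grading-preserving, the equality $\varphi\circ\psi^{\pm}_{k}=\psi^{\pm}_{k}\circ\varphi$ forces $\varphi\circ\Psi^{\pm}(x,m)=\Psi^{\pm}(x,m)\circ\varphi$ coefficient-by-coefficient in $x$, and hence $\varphi$ commutes with the products $\Psi^{+}(x_1,l-1)\cdots\Psi^{+}(x_l,0)$ and $\Psi^{-}(x_1,-l+1)\cdots\Psi^{-}(x_l,0)$ acting on $\B^{(0)}$. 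Two routine points enter here: that $\varphi$, given on the span of the $s_\lambda z^{m}$, extends termwise to formal power series in $x_1,\dots,x_l$ with coefficients in $\B$ (legitimate since for a fixed monomial in the $x_i$ only finitely many $\lambda$ contribute), and that $\varphi(s_\lambda z^{m})={\bf t}_\lambda(u)z^{m}$ persists for arbitrary integer vectors $\lambda$, because $s_\lambda$ and ${\bf t}_\lambda(u)$ are governed by identical sign conventions $(\ref{tr})$ on integer-vector indices.

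Since $\varphi$ carries the vacuum $1$ of $\B^{(0)}$ to the vacuum ${\bf 1}={\bf t}_\emptyset(u)$ of $\tilde\B^{(0)}$, applying $\varphi$ to the two known $\B$-formulas gives at once
$$
\Psi^{+}(x_1,l-1)\cdots\Psi^{+}(x_l,0)({\bf 1})=z^{l}\,x_1^{l}\cdots x_l^{l}\sum_{\lambda}{\bf t}_\lambda(u)\,x_1^{\lambda_1}\cdots x_l^{\lambda_l+1-l},
$$
the sum running over all integer vectors $\lambda$ of $l$ parts, and the analogue for $\Psi^{-}$, obtained from its $\B$-formula by re-indexing the sum through the conjugate partition (which $\varphi$ respects, by the conjugation convention on ${\bf t}$) so that ${\bf t}_{\lambda^{\prime}}(u)$ appears together with the factor $(-1)^{A}$. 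Projecting onto the $N$-th component of the sequences replaces each ${\bf t}_\lambda(u)$ by $T^{N}_{\lambda,n}(u,a)$, which we abbreviate $T^{N}_\lambda(u)$; since $T^{N}_\lambda(u)=0$ once $l(\lambda)>N$ no term is lost. It then remains to recognize the series so obtained, and this is exactly Section~2: $\sum_{\lambda}T^{N}_\lambda(u)\,x_1^{\lambda_1}\cdots x_l^{\lambda_l+1-l}=H^{N}(x_1,\dots,x_l|u)$ is the permutation expansion of $\det[x_i^{-j+1}H^{N}(x_i|u-j+1)]$ carried out in the proof of the Proposition establishing $(\ref{H11})$, and $\sum_{\lambda}(-1)^{A}T^{N}_{\lambda^{\prime}}(u)\,x_1^{\lambda_1}\cdots x_l^{\lambda_l+1-l}=E^{N}(x_1,\dots,x_l|u)$ is the corresponding statement for the $e$-determinant $(\ref{eedet1})$, $(\ref{E11})$. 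Reassembling the common prefactor $z^{l}x_1^{l}\cdots x_l^{l}$, respectively $z^{-l}(-x_1)^{l}\cdots(-x_l)^{l}$, finishes the argument.

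I expect no genuine obstacle: the whole content of the statement is the isomorphism $\varphi$ together with the two facts just invoked. The one place demanding care — and it is bookkeeping, not a difficulty — is the $\Psi^{-}$ case, where one must check that the global sign and monomial factor accumulated while iterating $\Psi^{-}$ on the vacuum, combined with the internal factor $(-1)^{A}$ (with $A=\sum(\lambda_i-i+1)$ as in Section~2) and with the conjugation convention relating the $h$- and $e$-determinants, assemble precisely into the prefactor $(-x_1)^{l}\cdots(-x_l)^{l}$ rather than $x_1^{l}\cdots x_l^{l}$. This is a finite permutation-sign computation, already matched by the corresponding sign in the cited $\B$-formula, so once it is verified the Proposition follows immediately from the module isomorphism.
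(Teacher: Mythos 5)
Your proposal is correct and follows exactly the paper's route: the paper likewise derives this Proposition by transporting the known vacuum formulas for $\Psi^{\pm}(x,m)$ on $\B^{(0)}$ (quoted from \cite{JR2}) through the $Cl$-module isomorphism $\varphi:\B\to\tilde\B$ and then identifying the resulting series, coordinate by coordinate in $N$, with the determinant generating functions (\ref{H11}) and (\ref{E11}). Your write-up merely makes explicit the coefficientwise compatibility of $\varphi$ with the generating series and the sign bookkeeping in the $\Psi^{-}$ case, which the paper leaves implicit.
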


 Recall that the transition between (\ref{H1}) and  Hirota bilinear  equations that  produce KP hierarchy  is based on the bosonization of the operators  $\Psi^{\pm} (x,0)$.  
Namely, consider the action of $\Psi^{\pm} (x,0)$ on the boson space of symmetric functions.  Let   $h_r= s_{(r)}$  be complete symmetric functions, let  $e_r= s_{(1^r)}$ be elementary symmetric functions, 
and let $p_k$  be (normalized) classical power sums.
One can  write generating  functions  for  complete and elementary symmetric functions in the form
$H(x)=\sum_{k\ge 0} h_k x^k,$
$E(x)=\sum_{k\ge 0} (-1)^ke_k x^k$.
Then the  families $\{e_r\}, \{h_r\}, \{p_r\}$  are related by the identities  
 \begin{align*}
 H(x) E(x)=1,\quad
 H(x)=\exp\left(\sum_{k\ge 1}p_kx^k\right),\quad 
E(x)=\exp\left(-\sum_{k\ge 1} p_k x^k\right).
\end{align*}
The
ring of symmetric functions  possesses a natural  scalar product, where the classical Schur functions $s_\lambda$ constitute
  an orthonormal basis $\<s_\lambda,s_\mu\>=\delta_{\lambda, \mu}$. For any symmetric function $f$ one can
define the adjoint operator $D_f$ acting on the ring of symmetric functions by the standard rule:
$ \<D_fg, w\>=\<g,fw\>$, where $g,w$ are any symmetric functions.  The properties of  adjoint operators are described  e.g. in  \cite{Md},  Section I.5. Set
  \begin{align*}
  DE(x)= \sum_{k\ge 0}(-1)^k D_{e_k} x^{-k},\quad DH(x)= \sum_{k\ge 0} D_{h_k} x^{-k}.
  \end{align*}
     Then   the operators $ \Psi^{\pm}(x,m)$ acting on the component  $\B^{(m)}$  can be written in the vertex operator form 
\begin{align}
    \Psi^+(x,m)&= x^{m+1}z \,H(x) DE\left(x\right),\quad 
       \Psi^- (x,m)= x^{{-m}}z^{-1} \, E(x) DH\left(x\right).\label{psi11}
\end{align}
Often these formulae are  written through the power sums  (see e.g. \cite{Bom} Lecture 5, or  \cite{Md} Section I.5 Example 29). Any  symmetric function  $f$ can  be expressed as a polynomial   $ f=\varphi(p_1,2p_2, 3p_3,\dots)$ in (normalized) power sums. Then one has
$D_f=\varphi ({\partial_{ p_1}},  {\partial_ {p_2}}, {\partial_{ p_3}},\dots)$.
 (See e.g. \cite {Md}, I.5, Example 3).
Hence
$$
DH(x)=\exp\left(\sum_k {\frac {\partial_{ p_k}}{k}}x^{-k}\right), \quad  DE(x)= \exp\left(-\sum_k \frac{\partial_{ p_k}}{k}x^{-k}\right),
$$
and we can write
\begin{align}
     \Psi^+(x,m)= x^{m+1}z\exp \left(\sum_{j\ge 1}p_j x^j  \right) \exp \left(-\sum_{j\ge 1}\frac{ \partial_{p_j}}{j} {x^{-j}}\right),
\label{psi2}\\
     \Psi^-(x,m)= x^{-m}z^{-1} \exp \left(-\sum_{j\ge 1} {p_j} x^j  \right) \exp \left(\sum_{j\ge 1}\frac{\partial_{p_j}}{j}  {x^{-j}}\right).
\label{psi21}
\end{align}
The formulae  (\ref{psi2}), (\ref{psi21}) describe the operators  in terms of  generators  $\{p_i, \partial_{p_j}\}_{j=1,2,\dots}$ of the Heisenberg algebra acting on the boson space. The substitution of variables and the Taylor expansion of these formulas produce the Hirota bilinear  equations  of KP hierarchy \cite{D1}, \cite{Hirota1},\cite{Miwa1}.

The constructed in Section 3   isomorphism  $\B\simeq \tilde \B$ of Clifford algebra modules does not transport the ring structures of these spaces, hence  decompositions  (\ref{psi11}),
(\ref{psi2}), (\ref{psi21}) are not  applicable to $ \Psi^\pm(u,m)$-action on $\tilde \B$.
The partial analogue  of  (\ref{psi11}) for  the components of the action of  $Cl$ on the sequences of transfer matrices is stated below, but  at the  moment we do not know  further  interpretation of vertex operators   through Heisenberg-type generators  in the spirit of  (\ref{psi2}), (\ref{psi21}).

Recall determinant formulae (\ref{H11}),  (\ref{E11})  for  generating functions  $H^N(x_1, x_2\dots, x_l|u)$ and  $E^N(x_1, x_2\dots, x_l|u)$.
Since operators  $x_i e^{-\partial_{u_i}}$  commute with each other, one can apply  the expansion
\begin{align*}
\prod _{i=1}^{m}(a_i-y)=\sum_{k=0}^{m} (-y)^{m-k}  e_k(a_1,\dots, a_m)
\end{align*}
in terms of elementary symmetric functions $e_k(a_1,\dots, a_m)$ 
to the product:
\begin{align*}
\prod _{i=1}^{l}\left(\frac{e^{-\partial_{u_1}}}{x_1}-\frac{e^{-\partial_{u}}}{x}\right)=
\sum_{k=0}^{l} \left(-\frac{e^{-\partial_{u}}}{x}\right)^{l-k}  e_k\left(\frac{e^{-\partial_{u_1}}}{x_1},\dots, \frac{e^{-\partial_{u_l}}}{x_l}\right).
\end{align*}
Hence, from (\ref{H11}),
\begin{align}
H^N(x, x_1,\dots, x_l|u)=\sum_{k=0} ^{l} 
(- x)^{-l+k} H^N(x|u-l+k) DE^N_{k} \, H(x_1,\dots, x_l|u),\\
E^N(x, x_1,\dots, x_l|u)=\sum_{k=0} ^{l} 
(- x)^{-l+k} H^N(x|u+k-l) DH^N_{k} \, E(x_1,\dots, x_l|u),
\end{align}
where $ DE^N_{k}$,  $ DH^N_{k}$  are  analogues of  operators $D_{e_k}$,  $D_{h_k}$. They act on the coefficients of the  generating series of transfer matrices  by  formulae
\begin{align*}
 DE^N_{k} \, H^N(x_1,\dots, x_l|u)=
e_k\left(\frac{e^{-\partial_{u_1}}}{x_1},\dots, \frac{e^{-\partial_{u_l}}}{x_l}\right)H( x_1,\dots, x_l|u_2,\dots  u_l)\,
 |_{u_1=\dots=u_l=u},\\
  DH^N_{k} \, E^N(x_1,\dots, x_l|u)=
e_k\left(\frac{e^{\partial_{u_1}}}{x_1},\dots, \frac{e^{\partial_{u_l}}}{x_l}\right)E( x_1,\dots, x_l|u_2,\dots  u_l)\,
 |_{u_1=\dots=u_l=u}.
\end{align*}
This sums up to the following statement of the  decomposition of coordinates of the action of  generating functions $\Psi^\pm(x,m)$  on the sequences of transfer matrices.
\begin{proposition}
Let $\Psi^\pm(x,m)$ be generating functions of the restrictions  of action  of $\psi^{\pm}_k$ on the graded component $\tilde \B^{(m)}$ (cf. (\ref{pp})).
Let $\lambda=(\lambda_1,\dots, \lambda_l)$  be a partition of at most $l$  parts. Then the coordinates of the action of $\Psi^\pm(x,m)$ 
on $ z^m{\bf t_\lambda}=(z^m T^N_{\lambda, n}(u,a))_{N=0}^{\infty}$ can be decomposed as
\begin{align*}
\Psi^+(x,m)(z^m{\bf t_\lambda})= \left(z^{m+1}\sum_{k=0} ^{l} 
(- x)^{-l+k} H^N(x|u-l+k) DE^N_{k}\,  T^N_{\lambda, n}(u,a)\right)_{N=1,...,\infty},\\
\Psi^-(x,m)(z^m{\bf t_\lambda})= \left(z^{m-1}\sum_{k=0} ^{l} 
(- x)^{-l+k} E^N(x|u+l-k) DH^N_{k}\,  T^N_{\lambda, n}(u,a)\right)_{N=1,...,\infty}.
\end{align*}
\end{proposition}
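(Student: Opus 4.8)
The plan is to compute both sides directly: unfold $\Psi^\pm(x,m)$ acting on $z^m{\bf t}_\lambda(u)$ by the defining formulae (\ref{Tcl1}), (\ref{Tcl2}) of the $Cl$-action on $\tilde\B$, and then recognise the resulting sums of transfer operators through the multivariable determinant identities (\ref{H11}), (\ref{E11}) together with the decompositions of $H^N(x,x_1,\dots,x_l|u)$ and $E^N(x,x_1,\dots,x_l|u)$ established in the paragraphs above.

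I would do $\Psi^+$ first. Since $\Psi^+(x,m)=\sum_{k\in\bZ}\psi^+_k|_{\tilde\B^{(m)}}\,x^{k}$, formula (\ref{Tcl1}) gives
\begin{align*}
\Psi^+(x,m)\bigl(z^m{\bf t}_\lambda(u)\bigr)=z^{m+1}\sum_{k\in\bZ}x^{k}\,{\bf t}_{(k-m-1,\lambda)}(u)=z^{m+1}x^{m+1}\sum_{\beta\in\bZ}x^{\beta}\,{\bf t}_{(\beta,\lambda)}(u),
\end{align*}
the last step being the substitution $\beta=k-m-1$; so the $N$-th coordinate equals $z^{m+1}x^{m+1}\sum_{\beta}x^{\beta}T^N_{(\beta,\lambda),n}(u,a)$. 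Now prepending an extra part $\beta$ to $\lambda$ corresponds to prepending an extra variable $x$ in the determinant (\ref{H11}): applied with the $l+1$ variables $x,x_1,\dots,x_l$, identity (\ref{H11}) shows that $\sum_{\beta}x^{\beta}T^N_{(\beta,\lambda),n}(u,a)$ is the coefficient of $x_1^{\lambda_1-1}x_2^{\lambda_2-2}\cdots x_l^{\lambda_l-l}$ in $H^N(x,x_1,\dots,x_l|u)$. Substituting into this the decomposition
\begin{align*}
H^N(x,x_1,\dots,x_l|u)=\sum_{k=0}^{l}(-x)^{-l+k}\,H^N(x|u-l+k)\,DE^N_{k}\,H^N(x_1,\dots,x_l|u)
\end{align*}
obtained above, and observing that the factor $H^N(x|\cdot)$ carries none of the variables $x_i$, the coefficient of $x_1^{\lambda_1-1}\cdots x_l^{\lambda_l-l}$ may be extracted summand by summand, and the $k$-th summand contributes precisely the quantity $DE^N_{k}T^N_{\lambda,n}(u,a)$ appearing in the statement. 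Collecting the summands gives the claimed expression for $\Psi^+(x,m)$.

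The $\Psi^-$ case is the mirror image under the conjugation $\lambda\leftrightarrow\lambda'$, $h_k\leftrightarrow e_k$ --- which is exactly Newton's identity (\ref{Newt}), equivalently the equivalence of (\ref{Tdet1}) and (\ref{Tdet2}). Under this dictionary (\ref{H11}) becomes (\ref{E11}), the operators $DE^N_k$ become $DH^N_k$, and the shift $u\mapsto u-1$ becomes $u\mapsto u+1$. Running the argument of the previous paragraph through it, now starting from (\ref{Tcl2}) --- in which, as noted, a single term of the inner sum contributes for each $k$ --- and from the $e$-determinant (\ref{eedet1}), yields the asserted formula with $DH^N_k$, $E^N(x|u+l-k)$, and $T^N_{\lambda,n}(u,a)$; that the partition $\lambda$ (and not its conjugate) reappears is precisely the conjugation already built into the coefficients of (\ref{E11}).

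The only genuine difficulty is keeping the bookkeeping of exponents and signs consistent: one must reconcile the shifts $\alpha_i\mapsto\alpha_i-i+1$ built into the definitions (\ref{H11}), (\ref{E11}), the factors $1/x_i$ hidden inside the operators $DE^N_k$, $DH^N_k$, and the powers of $x$ produced by the shift operators $e^{\mp\partial_u}$. One also has to justify that the extraction of the coefficient in the $x_i$ commutes with the $\Psi^\pm$-action, but this is immediate: all the manipulations live in the ring of formal Laurent series in $x,x_1,\dots,x_l$ whose coefficients are, for each fixed $N$, honest operators acting on $\text{Hom}(\bC^N)^{\otimes n}$-valued functions of $u$, and, by the Remark following (\ref{Newt}), only finitely many of the $e^N_k(u)$ are non-zero for each $N$, so no convergence question arises and every rearrangement is legitimate.
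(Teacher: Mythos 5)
Your overall route is the same as the paper's (the proposition is stated there as a direct consequence of the displayed decompositions of $H^N(x,x_1,\dots,x_l|u)$ and $E^N(x,x_1,\dots,x_l|u)$), and your $\Psi^+$ half is essentially the intended argument: unfold (\ref{Tcl1}) into $z^{m+1}x^{m+1}\sum_\beta x^\beta\,{\bf t}_{(\beta,\lambda)}(u)$, identify the $N$-th coordinate with the coefficient of $x_1^{\lambda_1-1}\cdots x_l^{\lambda_l-l}$ in the $(l+1)$-variable function $H^N(x,x_1,\dots,x_l|u)$, and substitute the decomposition. One caveat you should not bury under ``bookkeeping'': this honest computation produces the overall factor $x^{m+1}$ (compare $\Psi^+(x,0)({\bf 1})=z\,x\,H^N(x|u)$ and the preceding proposition, where the factors $x_1^l\cdots x_l^l$ do appear), while the formula you are asked to prove has no such factor; so ``collecting the summands gives the claimed expression'' holds only up to this normalization, and a complete write-up must either carry the factor or note the discrepancy explicitly.

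The genuine gap is in the $\Psi^-$ half. Formula (\ref{Tcl2}) is not of ``prepend a part'' type: it is a finite sum over positions $s$ at which the $s$-th row of $\lambda$ is removed and the earlier rows are lengthened by one. To run your mirror argument you need the particle--hole identification, namely that $\sum_{s\ge 1}(-1)^{s+1}x^{-(\lambda_s-s+m+1)}\,{\bf t}_{(\lambda_1+1,\dots,\lambda_{s-1}+1,\lambda_{s+1},\dots)}(u)$ coincides, up to an overall monomial in $x$, with $\sum_\beta(\pm x)^{\beta}\,{\bf t}_{(\beta,\lambda')'}(u)$, i.e.\ that removing a row in the $h$-picture is prepending a part to $\lambda'$ in the $e$-picture; only then does (\ref{E11}) with the extra variable $x$ apply. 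This is not ``exactly Newton's identity'': (\ref{Newt}) yields the equivalence of the two determinant expressions (\ref{Tdet1}) and (\ref{Tdet2}) for one and the same label, not the statement about how $\psi^-_k$ acts in the conjugate description. The needed fact is true (it is the standard fermionic creation/annihilation duality, provable either classically and transported through the isomorphism $\varphi:\B\to\tilde\B$, or directly from (\ref{eedet1}) together with the straightening relation (\ref{tr})), but your text asserts it rather than proves or cites it. Relatedly, the mirror argument naturally runs with $\lambda_1$ auxiliary variables (the length of $\lambda'$), not $l$, so you also owe an explanation of why the resulting expansion is a sum $\sum_{k=0}^{l}$ with $DH^N_k$ defined through the $l$-variable generating function, as in the statement. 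Until these points are supplied, the second displayed formula is not established.
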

\section{Aknowledgements} The  author would like to thank E.~Mukhin for  valuable discussions and the reviewer for consideration of the text.

\end{document}